\documentclass[11pt]{article}
\usepackage{amsmath, amsthm, amssymb, bbm, systeme}
\usepackage[margin=1in]{geometry}
\usepackage[round]{natbib}



\theoremstyle{plain}
\newtheorem{theorem}{Theorem}[section]
\newtheorem{lemma}[theorem]{Lemma}

\theoremstyle{definition}

\theoremstyle{remark}
\newtheorem{example}[theorem]{Example}

\newcommand{\eps}{\varepsilon}

\newcommand{\cD}{\mathcal{D}}
\newcommand{\ESPREV}{\textsc{EspRev}}
\newcommand{\OPT}{\textsc{MyeRev}}
\newcommand{\asp}{\textsc{ASP}}
\newcommand{\esp}{\textsc{ESP}}
\newcommand{\lsp}{\textsc{LSP}}
\newcommand{\spm}{\textsc{SPM}}

\begin{document}
\title{Separation between Second Price Auctions with Personalized Reserves and the Revenue Optimal Auction
}
\author{Will Ma\thanks{Google Research. \tt{willma@google.com}.}
 \and Balasubramanian Sivan\thanks{Google Research. \tt{balusivan@google.com}.}
}
\date{}
\maketitle{}

\begin{abstract}
What fraction of the single item $n$ buyers setting's expected optimal revenue (\OPT) can the second price auction with reserves achieve? In the special case where the buyers' valuation distributions are all drawn i.i.d. and the distributions satisfy the regularity condition, the second price auction with an anonymous reserve (\asp) is the optimal auction itself.  As the setting gets more complex, there are established upper bounds on the fraction of \OPT\ that \asp\ can achieve. On the contrary, no such upper bounds are known for the fraction of \OPT\ achievable by the second price auction with \emph{eager personalized reserves} (\esp). In particular, no separation was earlier known between \esp's revenue and \OPT\ even in the most general setting of non-identical product distributions that don't satisfy the regularity condition. In this paper we establish the first separation results for \esp: we show that even in the case of distributions drawn i.i.d., but not necessarily satisfying the regularity condition, the \esp\ cannot achieve more than a $0.778$ fraction of \OPT\ in general. Combined with the result from~\cite{correa2017posted} that \esp\ can achieve at least a $0.745$ fraction of \OPT, this nearly bridges the gap between upper and lower bounds on \esp's approximation factor.
\end{abstract}

\section{Introduction}
Approximating the optimal auction's expected revenue \OPT\ by simple mechanisms has received a lot of attention in the algorithmic game theory literature, especially in the last decade. The seminal result of~\cite{M81} describes the optimal auction in a single agent $n$-items setting when buyer valuations are drawn from a product distribution. When the buyer valuation distributions are independent and identical (i.i.d.) and satisfy a technical regularity condition, Myerson's optimal auction coincides with the second price (Vickrey's) auction with an anonymous reserve price (\asp). When the setting gets more complex, either because the distributions are not identical or because the distributions don't satisfy the regularity condition, or both, \asp\ ceases to be the optimal auction.  Bounds on the suboptimality are known in many cases:
\begin{itemize}
    \item when the distributions are non-identical (independent) and irregular,~\cite{AHNPY15} show that \asp\ cannot get more than a $\frac{1}{n}$ fraction of \OPT\ in general, and it is straightforward to show that this is tight;
    \item when the distributions are non-identical (independent) and regular,~\cite{HR09} show that \asp\ cannot get more than a $\frac{1}{2}$ fraction of \OPT\ in general,~\cite{AHNPY15} show that \asp\ can achieve a $\frac{1}{e}$ fraction of \OPT\ in general, and recently~\cite{jin2019tight} show that \asp\ cannot get more than a $\frac{1}{2.15}$ fraction of \OPT\ in general;
    \item when the distributions are identical (independent) and irregular, the pricing problem can be reduced to the prophet inequalities problem on ironed virtual values, for which a $\frac{1}{2}$ approximation via a uniform threshold is known (\cite{samuel-cahn1984}). Since the distributions are identical, the uniform threshold in the ironed virtual value space translates into a uniform threshold in the value space, thus showing that ASP can achieve a $\frac{1}{2}$ approximation as well.  This approximation factor of \asp\ is tight, as shown in \citet{hartline2013mechanism}, although we will also provide a self-contained proof here as a corollary.
\end{itemize}

On the contrary, \emph{no upper bounds are known} on the fraction of \OPT\ obtainable by a second price auction with personalized reserve prices.
When the reserve prices are personalized, it matters whether the bidders are ranked after eliminating those below the reserve (eager, ``\esp") or before eliminating those below the reserve (lazy, ``\lsp").
In the eager version, we first eliminate bidders below their respective reserves, and then charge the highest surviving bidder (if any) the maximum of his own reserve price and the second highest surviving bid.
In the lazy version, we first decide the potential winner to be the highest bidder, eliminate him if he doesn't survive his reserve price (in which case we allocate the good to nobody), and then if he survives allocate the good to him at a price of maximum of his reserve price and the second highest bid.~\cite{PPV16} show that for general correlated distributions, the optimal revenue obtainable from \esp\ and \lsp\ are incomparable (i.e.\ either could be higher on a specific instance), but are within a factor of two. When the distributions are independent~\citeauthor{PPV16} show that the optimal \esp's revenue dominates the optimal \lsp's revenue. Therefore from the perspective of upper bounding the fraction of \OPT\ that is obtainable, 
any result for \esp\ implies a result for \lsp.
ESP's are also more widely used, and are a more natural auction to run\footnote{For instance, when there is at least one buyer that survives their reserve price, \esp\ always allocates the item, whereas \lsp\ could result in no allocation even then.}).

\paragraph{Our result.} Our main result (Theorem~\ref{thm:mainthm}) is an upper bound on the maximum fraction of \OPT\ that \esp\ can obtain. We show that in the $n$ buyers single item setting, even when the buyer values are drawn i.i.d. (but not necessarily from a regular distribution), the fraction of \OPT\ that \esp\ can extract in general is at most $0.778$. The i.i.d. irregular setting is the simplest non-trivial setting to consider since, as discussed earlier, the i.i.d. regular setting results in a $1$ approximation from even a second price auction with anonymous reserve price. 

\paragraph{Corollary.} Our construction can also be modified to show that \asp\ cannot get more than $\frac{1}{2}$ of \OPT\ for i.i.d. irregular valuations, establishing that the $\frac{1}{2}$ approximation factor for \asp\ is tight.

\paragraph{Near tightness of ESP result.} A mechanism very related to \esp\ is the sequential posted price mechanism (\spm) where the seller computes one price per buyer, and then approaches the buyers in the descending order of posted prices until the item is sold. In general single-parameter environments with matroid feasibility constraints (this includes the single item setting we study as a special case),~\cite{CHMS10} showed the the revenue of an \esp\, that uses the posted prices of an \spm\ as its personalized reserve prices is at least as high as the \spm's revenue. This, combined with the recent result of~\cite{correa2017posted} showing that \spm\ can achieve a $0.745$ fraction of \OPT\ in the i.i.d. setting with possibly irregular distributions, shows that \esp\ can also achieve at least a $0.745$ approximation factor in the i.i.d. setting. Thus, our upper bound of $0.778$ on \esp's approximation factor in the i.i.d. setting is quite close to this lower bound of $0.745$. In the case of \spm\, both the upper and lower bound are known to be $0.745$: the upper bound of $0.745$ for \spm s was established by~\cite{HK82} about $3.5$ decades prior to the matching lower bound established by~\citeauthor{correa2017posted}

\paragraph{\esp\ vs \spm: the benefit of simultaneity.} The primary advantage of \esp\ over \spm\ is that \esp\ considers all buyers simultaneously while \spm\ considers buyers sequentially. Thus, as~\cite{CHMS10} show, \esp\ should only earn better revenue by benefiting from this simultaneity. Our upper bound of $0.778$ on the approximation factor of \esp\ shows that this benefit of simultaneity is quite small in general, given that \spm s themselves can achieve a $0.745$ approximation. In a sense, our result shows that \emph{discrimination} is more crucial for revenue maximization than \emph{simultaneity}. Closing this gap further, or establishing a separation between \spm\ and \esp\ remains an interesting open problem. 

\section{Upper bound on ESP approximation factor}
We provide an example that establishes the upper bound on \esp's approximation factor.
\begin{example} \label{eg::twoPoint}
There are $n$ buyers, with $n\to\infty$.  Each buyer's valuation is drawn i.i.d. from a discrete distribution $F$ where
\begin{align*}
v\sim F = \begin{cases}
n, &\text{w.p. $1/n^2$} \\
\alpha/\beta, &\text{w.p. $\beta/n - 1/n^2$} \\
0, &\text{w.p. $1 - \beta/n$}
\end{cases}
\end{align*}
Here $\alpha > 1$ and $\beta>0$ are constants to be optimized later to minimize the ratio of \OPT\ earned by \esp\ (note that since they are constants they don't scale with $n$, and therefore $\beta >> \frac{1}{n}$ and $\alpha < n\beta$; these are things we use in Lemma~\ref{prop::mye} below).
\end{example}

\begin{lemma} \label{prop::mye}
In a $n$ buyers setting, when the buyer valuations are drawn i.i.d. from the distribution $F$ described in Example~\ref{eg::twoPoint}, the expected revenue \OPT\ of Myerson's optimal auction approaches
\begin{align*}
1+\frac{\alpha-1}{\beta}(1-e^{-\beta}).
\end{align*}
\end{lemma}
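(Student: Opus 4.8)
The plan is to identify Myerson's optimal auction explicitly by passing to quantile space, and then to evaluate its expected revenue as $n\to\infty$. First I would record the single‑buyer revenue curve of $F$: writing $q=\Pr[v\ge\cdot\,]$ for the quantile, the relevant breakpoints are $(0,0)$, then $(q_1,R_1)=(1/n^2,\ 1/n)$ coming from posting price $n$, then $(q_2,R_2)=(\beta/n,\ \alpha/n)$ coming from posting price $\alpha/\beta$, then $(1,0)$ coming from posting price $0$; the revenue curve $R(q)$ is the piecewise‑linear interpolant of these four points. The key structural point is that for all large $n$ this curve is already concave, hence equals its own concave majorant and \emph{no ironing is needed}: the three consecutive slopes are $n$, $\ \frac{(\alpha-1)n}{\beta n-1}$, and $\ \frac{-\alpha}{n-\beta}$, and these are non‑increasing precisely because $\beta n>\alpha$ and $\alpha>1$ — which is exactly where the standing assumptions from Example~\ref{eg::twoPoint} (since $\alpha,\beta$ are constants, $\beta\gg 1/n$ and $\alpha<n\beta$) are used; they also guarantee $q_1<q_2<1$.

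Since there is no ironing, the optimal auction is the virtual‑welfare maximizer with virtual values read off as the slopes of $R$: a buyer with value $n$ has virtual value $n$, a buyer with value $\alpha/\beta$ has virtual value $\frac{(\alpha-1)n}{\beta n-1}>0$, and a buyer with value $0$ has negative virtual value. Hence the optimal auction allocates to an arbitrary buyer with value $n$ if one exists, else to an arbitrary buyer with value $\alpha/\beta$ if one exists, and else to nobody; and by Myerson's lemma (revenue equals expected virtual value of the winner) its expected revenue is
\[
n\cdot \Pr[\exists\text{ a buyer with value }n]\;+\;\frac{(\alpha-1)n}{\beta n-1}\cdot\Pr[\text{no buyer has value }n,\ \exists\text{ a buyer with value }\alpha/\beta].
\]

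Finally I would take $n\to\infty$ in this expression. The number of buyers with value $n$ is $\mathrm{Binomial}(n,1/n^2)$, so $\Pr[\exists\text{ a buyer with value }n]=1-(1-1/n^2)^n=\tfrac1n+O(1/n^2)$, and multiplying by $n$ gives a contribution tending to $1$. For the second term, $\Pr[\text{no buyer has value }n]=(1-1/n^2)^n\to 1$, while conditioned on that event the number of buyers with value $\alpha/\beta$ is binomial with success probability $(\beta/n-1/n^2)/(1-1/n^2)=\beta/n+o(1/n)$, hence converges in distribution to $\mathrm{Poisson}(\beta)$; thus $\Pr[\text{no buyer has value }n,\ \exists\text{ a buyer with value }\alpha/\beta]\to 1-e^{-\beta}$. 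Since $\frac{(\alpha-1)n}{\beta n-1}\to\frac{\alpha-1}{\beta}$, the second term tends to $\frac{\alpha-1}{\beta}(1-e^{-\beta})$, and summing the two limits yields $1+\frac{\alpha-1}{\beta}(1-e^{-\beta})$. The only genuinely delicate step is this asymptotic bookkeeping: the first summand multiplies a $\Theta(1/n)$ probability by the $\Theta(n)$ value $n$, so one must retain the correct second‑order expansion of $(1-1/n^2)^n$, and the Poisson limit must be applied to the \emph{conditional} distribution with its slightly perturbed parameter. Verifying concavity of the revenue curve, by contrast, is a one‑line computation once the breakpoints are written down.
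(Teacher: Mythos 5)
Your proposal is correct and follows essentially the same route as the paper: pass to quantile space, read off the (ironed) virtual values $n$ and $\frac{\alpha-1}{\beta-1/n}$ as slopes of the concave revenue curve, identify the optimal auction as the virtual-welfare maximizer, and evaluate the expected virtual surplus $n\bigl(1-(1-\tfrac{1}{n^2})^n\bigr)+\frac{\alpha-1}{\beta-1/n}\bigl((1-\tfrac{1}{n^2})^n-(1-\tfrac{\beta}{n})^n\bigr)$ in the limit, which is exactly the paper's final display. The one place where you and the paper appear to contradict each other --- you assert the revenue curve is concave and ``no ironing is needed,'' while the paper asserts it is \emph{not} concave and irons it --- is a definitional artifact rather than an error on either side: the paper takes $R(q)=qF^{-1}(1-q)$ with deterministic prices, which drops to $q\alpha/\beta$ immediately to the right of $q=1/n^2$ and hence is non-concave, whereas your curve is the piecewise-linear interpolant of the breakpoints (the randomized-price revenue curve), which is precisely the concave hull the paper obtains after ironing. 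Since the ironed quantile interval contains only the single atom $\alpha/\beta$, both conventions yield identical virtual values, the identical mechanism, and the identical revenue, so your argument goes through; your asymptotic bookkeeping (the second-order expansion of $(1-1/n^2)^n$ against the factor $n$, and the Poisson limit for the conditional count of $\alpha/\beta$-buyers) is also correct.
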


\begin{proof}
We compute the revenue as the expected ironed virtual surplus. To compute the latter, consider the revenue curve $R(q)=qF^{-1}(1-q)$, where in this case $F^{-1}(1-q)$ denotes the
maximum
price at which the probability of getting a sale (from a single buyer with valuation drawn from $F$) is at least $q$.  It can be computed that
\begin{align*}
R(q)=qF^{-1}(1-q)=\begin{cases}
qn, &0<q\le 1/n^2; \\
q\alpha/\beta, &1/n^2<q\le\beta/n; \\
0, &\beta/n<q\le1.
\end{cases}
\end{align*}
Note that the revenue curve is not concave, because $R(1/n^2)=1/n$ and $R(\beta/n)=\alpha/n>1/n$, while $R(1/n^2+\eps) = (\frac{1}{n^2}+\epsilon)\frac{\alpha}{\beta}<1/n$ for sufficiently small $\eps>0$.
The ironed revenue curve over $q\in[1/n^2,\beta/n]$ is formed by joining the points $(1/n^2,1/n)$ and $(\beta/n,\alpha/n)$. The slope of three line segments in the ironed revenue curve are the ironed virtual values corresponding to the three value realizations:
\begin{itemize}
    \item the slope of the line segment between $(0,0)$ and $(1/n^2,1/n)$ is $n$ and that is the virtual value (which is equal to the ironed virtual value for this segment) of buyers with value $n$;
    \item the slope of the line segment between $(1/n^2,1/n)$ and $(\beta/n,\alpha/n)$ is $\frac{\alpha-1}{\beta-1/n}$
and that is the ironed virtual valuation of buyers with value $\alpha/\beta$.
    \item buyers with value $0$ have negative ironed virtual value and are irrelevant for revenue purposes.
\end{itemize}
 
The complete description of Myerson's optimal mechanism here is as follows:
\begin{itemize}
    \item any buyer who bids below $\frac{\alpha}{\beta}$ is never allocated, and always pays 0;
    \item if there are no buyers with bid $n$ or larger: allocate the item to a uniformly random buyer among all buyers with bids in $[\frac{\alpha}{\beta},n)$ and charge him $\frac{\alpha}{\beta}$;
    \item if there is exactly one buyer with bid of $n$ or larger, that buyer gets the item with probability $1$, and he pays $n - (n-\frac{\alpha}{\beta})\cdot 1/(1 + \text{num of buyers with bids in } [\frac{\alpha}{\beta},n))$ which is basically the area to the left of the allocation curve of that buyer;
    \item if there is more than one buyer with bid of n or larger, allocate the item to a uniformly random buyer among all buyers with bids in $[n, \infty)$ and charge him $n$. 
\end{itemize}
The expected revenue of Myerson's optimal mechanism \OPT\ equals the expected ironed virtual surplus, which can be computed as follows.
With probability $1-(1-1/n^2)^n$, at least one buyer has virtual valuation $n$.
With probability $(1-1/n^2)^n-(1-\beta/n)^n$, no buyer has virtual valuation $n$ but at least one buyer has ironed virtual valuation $\frac{\alpha-1}{\beta - 1/n}$.
Therefore, the expected ironed virtual surplus, as the number of buyers approaches $\infty$, is
\begin{align*}
\lim_{n\to\infty}\left(n\Big(1-(1-\frac{1}{n^2})^n\Big)+\frac{\alpha-1}{\beta - 1/n}\Big((1-\frac{1}{n^2})^n-(1-\frac{\beta}{n})^n\Big)\right)=1+\frac{\alpha-1}{\beta}(1-e^{-\beta}),
\end{align*}
completing the proof.
\end{proof}

\begin{lemma} \label{prop::esp}
In a $n$ buyers setting, when the buyer valuations are drawn i.i.d. from the distribution $F$ described in Example~\ref{eg::twoPoint}, the expected revenue \ESPREV\ of \esp\ as $n\to\infty$ is at most 
\begin{align*}
1+\frac{\alpha-\ln\alpha-1}{\beta}.
\end{align*}
\end{lemma}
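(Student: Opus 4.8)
The plan is to reduce the optimization over all personalized reserve vectors to a one‑parameter family, compute the resulting revenue in the $n\to\infty$ limit, and then maximize over that single parameter.

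\emph{Reduction.} Since $F$ is supported on $\{0,\alpha/\beta,n\}$ with $0<\alpha/\beta<n$, only the ``tier'' of a buyer's reserve $r_i$ matters for whether that buyer survives: $r_i\le 0$ (always survives), $0<r_i\le\alpha/\beta$ (survives iff the value is $\alpha/\beta$ or $n$), $\alpha/\beta<r_i\le n$ (survives iff the value is $n$), and $r_i>n$ (never survives). I would argue it is without loss of generality to take $r_i\in\{\alpha/\beta,n\}$ for every buyer: raising $r_i$ within its tier leaves \emph{every} buyer's survival set unchanged, hence does not change the allocation and only weakly raises the winner's payment $\max(r_{\text{winner}},\text{second-highest surviving bid})$; moving $r_i>n$ down to $n$ only adds the possibility that buyer $i$ wins with value $n$ and pays $n$, which dominates the revenue the item fetched before (a surviving winner with value below $n$ has reserve at most its value and so pays less than $n$); and a reserve of $0$ may be raised to $\alpha/\beta$ because a value‑$0$ bidder never wins at a positive price and never affects the second price paid by anyone else (a second‑highest bid of $0$ is interchangeable with no second bid, as reserves are nonnegative). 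Having reduced to reserves in $\{\alpha/\beta,n\}$, exchangeability of the i.i.d.\ valuation profile implies the expected revenue depends only on the number $k$ of buyers assigned the ``high'' reserve $n$; call the other $n-k$ buyers ``low''.

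\emph{Revenue computation.} The crucial observation is that a \emph{low} buyer who happens to win with value $n$ pays only $\max(\alpha/\beta,\text{second price})=\alpha/\beta$, whereas a \emph{high} buyer can win only with value $n$ and then pays $n$. I would condition on the number $J$ of buyers realizing value $n$. With probability $(1-1/n^2)^n\to 1$ we have $J=0$; then the survivors are exactly the low buyers with value $\alpha/\beta$, whose count converges to a Poisson random variable with mean $\beta(1-k/n)$, so the item is sold (at price $\alpha/\beta$) with probability tending to $1-e^{-\beta(1-k/n)}$, contributing $\tfrac{\alpha}{\beta}\bigl(1-e^{-\beta(1-k/n)}\bigr)$. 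With probability $\sim 1/n$ we have $J=1$; by symmetry that buyer is ``high'' with probability $k/n$ (revenue $n$) and ``low'' with probability $(n-k)/n$ (revenue $\alpha/\beta$), so the contribution tends to $k/n$. The event $J\ge 2$ has probability $O(1/n^2)$ while revenue is always at most $n$, contributing $o(1)$. All estimates are uniform in $k\in\{0,\dots,n\}$, so writing $c=k/n\in[0,1]$ we obtain $\ESPREV=g(c)+o(1)$ with
\begin{align*}
g(c)=c+\frac{\alpha}{\beta}\bigl(1-e^{-\beta(1-c)}\bigr).
\end{align*}

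\emph{Optimization.} It remains to check $g(c)\le 1+\tfrac{\alpha-\ln\alpha-1}{\beta}$ for all $c\in[0,1]$ (in fact for all real $c$). Substituting $x=\beta(1-c)$ and clearing $\beta>0$, this is equivalent to $\phi(x):=(\ln\alpha+1-x)e^{x}\le\alpha$; since $\phi'(x)=(\ln\alpha-x)e^{x}$, the function $\phi$ is maximized at $x=\ln\alpha$ with value $\phi(\ln\alpha)=\alpha$, which proves the bound (equality holds exactly at $c=1-\tfrac{\ln\alpha}{\beta}$, recovering the tight reserve split). I expect the main obstacle to be the reduction step, not the computation: one must avoid the tempting but false intuition that raising reserves always helps in an \esp\ and instead exploit the three‑point structure to verify that the specific moves above are revenue‑safe; the Poisson‑limit bookkeeping and its uniformity in $k$ are routine, and the final maximization is elementary.
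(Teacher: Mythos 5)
Your proposal is correct and follows essentially the same route as the paper: reduce without loss of generality to reserves in $\{\alpha/\beta,n\}$, parametrize by the fraction $c$ of buyers given the high reserve, derive the limiting revenue $c+\frac{\alpha}{\beta}\bigl(1-e^{-\beta(1-c)}\bigr)$, and maximize over $c$. The only differences are cosmetic --- you condition on the number of value-$n$ realizations rather than on whether some high-reserve buyer survives, you spell out the reduction to two reserve levels in more detail than the paper does, and your verification that $g(c)$ never exceeds the stated bound also covers the case where the unconstrained maximizer $1-\frac{\ln\alpha}{\beta}$ falls outside $[0,1]$, which the paper's argument implicitly assumes away.
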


\begin{proof}
Consider any revenue-maximizing ESP auction.  Since the only possible non-zero valuations are $n$ and $\alpha/\beta$, it does not lose generality to assume that all reserve prices are set to either $n$ or $\alpha/\beta$.
Let $z\in\{0,\frac{1}{n},\frac{2}{n},\ldots,1\}$ denote the fraction of buyers whose reserve is set to $n$, so that the number of buyers with reserves $n$ and $\alpha/\beta$ are $zn$ and $(1-z)n$, respectively.

With probability $1-(1-\frac{1}{n^2})^{zn}$, at least one of the buyers with reserve $n$ clears their reserve price, and the ESP auction earns the maximum possible revenue of $n$.
Otherwise, all $zn$ buyers with reserve $n$ are eliminated from the auction and we consider the other buyers.

For the other $(1-z)n$ buyers with reserve $\alpha/\beta$,
at least one of them will clear their reserve with probability $1-(1-\frac{\beta}{n})^{(1-z)n}$,
in which case the ESP auction is guaranteed to earn at least $\alpha/\beta$.
In this case, there is also the possibility of earning the higher revenue of $n$, if at least two buyers report a valuation of $n$.
However, the probability of this event is at most $\binom{n}{2}$ times the probability that a \textit{given} pair of buyers both report $n$, by the union bound.
Therefore, the expected revenue collected from this event is at most $n\binom{n}{2}(\frac{1}{n^2})^2=O(\frac{1}{n})$, and can be ignored as $n\to\infty$.

Collecting the above terms, for any $n$ and $z\in\{0,\frac{1}{n},\frac{2}{n},\ldots,1\}$, the expected revenue of the corresponding ESP auction is
\begin{align}
&n\Big(1-(1-\frac{1}{n^2})^{zn}\Big)+\frac{\alpha}{\beta}(1-\frac{1}{n^2})^{zn}\Big(1-(1-\frac{\beta}{n})^{(1-z)n}\Big) \nonumber \\
\le\ &n\Big(1-(1-\binom{zn}{1}\frac{1}{n^2})\Big)+\frac{\alpha}{\beta}(1)\Big(1-e^{-\beta(1-z)}\Big) \nonumber \\
= &z+\frac{\alpha}{\beta}\Big(1-e^{-\beta(1-z)}\Big)+O(\frac{1}{n}) \label{eqn:espUB}
\end{align}
where we used the binomial theorem in the first inequality. The maximum of the concave function $z+\frac{\alpha}{\beta}(1-e^{-\beta(1-z)})$ over $z\in[0,1]$ is attained at $z=1-\frac{\ln\alpha}{\beta}$, and equals $1-\frac{\ln\alpha}{\beta}+\frac{\alpha}{\beta}(1-\frac{1}{\alpha})$, completing the proof.
\end{proof}

\begin{theorem}
\label{thm:mainthm}
Let $\OPT(n,\cD)$ and $\ESPREV(n,\cD)$ denote respectively the revenue of the optimal auction (Myerson's auction) and the second price auction with personalized reserves in a single-item setting with $n$ buyers whose values are drawn i.i.d from $\cD$. Then:
\begin{align*}
\inf_{n,\cD}\frac{\ESPREV(n,\cD)}{\OPT(n,\cD)}<0.778.
\end{align*}
\end{theorem}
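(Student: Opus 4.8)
The plan is to feed the two preceding lemmas into a single ratio and then optimize the free constants $\alpha,\beta$ of Example~\ref{eg::twoPoint}. By Lemma~\ref{prop::mye}, $\OPT(n,F)\to 1+\frac{\alpha-1}{\beta}(1-e^{-\beta})$, and by Lemma~\ref{prop::esp}, $\limsup_{n\to\infty}\ESPREV(n,F)\le 1+\frac{\alpha-\ln\alpha-1}{\beta}$. Dividing, for the family of instances in Example~\ref{eg::twoPoint} we obtain
\[
\limsup_{n\to\infty}\frac{\ESPREV(n,F)}{\OPT(n,F)}\ \le\ \frac{1+\frac{\alpha-\ln\alpha-1}{\beta}}{1+\frac{\alpha-1}{\beta}(1-e^{-\beta})}\ =\ \frac{\beta+\alpha-\ln\alpha-1}{\beta+(\alpha-1)(1-e^{-\beta})},
\]
so it suffices to exhibit constants $\alpha>1$, $\beta>0$ for which the right-hand side is strictly below $0.778$.

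The remaining step is the (two-variable, essentially unconstrained) minimization of $g(\alpha,\beta):=\frac{\beta+\alpha-\ln\alpha-1}{\beta+(\alpha-1)(1-e^{-\beta})}$ over $\alpha>1$, $\beta>0$; the only thing to watch is $\beta\ge\ln\alpha$, so that the maximizer $z^\star=1-\frac{\ln\alpha}{\beta}$ used in Lemma~\ref{prop::esp} lies in $[0,1]$. A clean admissible choice is $\alpha=3$, $\beta=2$ (then $z^\star=1-\tfrac{\ln 3}{2}\approx 0.45$), which already gives $g(3,2)=\frac{4-\ln 3}{2\,(2-e^{-2})}\approx 0.7779<0.778$; a small amount of numerical tuning of $(\alpha,\beta)$ near this point does not improve the stated bound, since the minimum of $g$ is itself $\approx 0.7779$. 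Having fixed such constants, the theorem follows: the displayed limit being strictly below $0.778$ forces $\ESPREV(n,F)/\OPT(n,F)<0.778$ for all sufficiently large $n$, and hence $\inf_{n,\cD}\ESPREV(n,\cD)/\OPT(n,\cD)<0.778$.

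I do not expect a genuine obstacle at this stage: the content of the separation is already carried by Lemmas~\ref{prop::mye} and~\ref{prop::esp} --- the exact limiting value of \OPT\ computed via ironed virtual surplus, and the union-bound and concavity argument that upper-bounds \ESPREV\ after the \esp\ design space is collapsed to the single scalar $z$. The only point to handle carefully is that the distribution $F$ of Example~\ref{eg::twoPoint} depends on $n$, so the infimum in the theorem is realized along the sequence of instances $(n,F)$ with $n\to\infty$ rather than at a single fixed $\cD$; this is precisely how the $n\to\infty$ statements of the two lemmas get invoked.
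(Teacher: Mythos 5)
Your proposal is correct and follows essentially the same route as the paper: combine Lemmas~\ref{prop::mye} and~\ref{prop::esp} into the ratio $\frac{\beta+\alpha-\ln\alpha-1}{\beta+(\alpha-1)(1-e^{-\beta})}$ and exhibit constants making it drop below $0.778$ (the paper uses $\alpha=2.91$, $\beta=1.89$; your $\alpha=3$, $\beta=2$ gives $\approx 0.77799$, which also works, and your side remarks about needing $\beta\ge\ln\alpha$ and about $F$ depending on $n$ are valid but not substantive departures).
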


\begin{proof}
Consider Example~\ref{eg::twoPoint}.  By Lemmas~\ref{prop::mye} and~\ref{prop::esp}, the fraction of Myerson's optimal revenue earned by an ESP auction in this example is at most
\begin{align} \label{eqn::ratio}
\frac{\beta+\alpha-1-\ln\alpha}{\beta+(\alpha-1)(1-e^{-\beta})},
\end{align}
for any values of $\alpha>1$ and $\beta>0$.  By numerically minimizing over such values of $\alpha$ and $\beta$, it can be checked that when $\alpha=2.91$ and $\beta=1.89$, the value of (\ref{eqn::ratio}) is less than 0.778, completing the proof.
\end{proof}

\subsection{Upper bound on ASP approximation factor}

Consider the distribution from Example~\ref{eg::twoPoint} with $\alpha=\beta=n$ and let $n\to\infty$.
By Lemma~\ref{prop::mye}, the optimal revenue of Myerson is $1+\frac{n-1}{n}(1-e^{-n})$ which approaches 2 as $n\to\infty$.
On the other hand, \esp\ must set all of the reserves to either $n$ or 1.
By the analysis in Lemma~\ref{prop::esp}, this is upper-bounded by expression~\eqref{eqn:espUB} with $z=0$ or $z=1$, i.e. the revenue of \asp\ is at most
\begin{align*}
\max\{1-e^{-n}+O(\frac{1}{n}),1+O(\frac{1}{n})\}
\end{align*}
which approaches 1 as $n\to\infty$.  This completes the proof.

\bibliographystyle{abbrvnat}
\bibliography{main.bib}
\end{document}